% !TeX document-id = {9437ee78-5f57-4e5e-952d-71a4e4fcde8c}
% !BIB TS-program = biber
%\documentclass[conference,a4paper]{IEEEtran}
%\documentclass[11pt,onecolumn]{IEEEtran}
%\addtolength{\topmargin}{9mm}

%% LaTeX Template for ISIT 2024
%%
%% by Stefan M. Moser, October 2017
%% (with some modifications by Tobias Koch, November 2023)
%% 
%% derived from bare_conf.tex, V1.4a, 2014/09/17, by Michael Shell
%% for use with IEEEtran.cls version 1.8b or later
%%
%% Support sites for IEEEtran.cls:
%%
%% http://www.michaelshell.org/tex/ieeetran/
%% http://moser-isi.ethz.ch/manuals.html#eqlatex
%% http://www.ctan.org/tex-archive/macros/latex/contrib/IEEEtran/
%%

\documentclass[conference,letterpaper]{IEEEtran}

%% depending on your installation, you may wish to adjust the top margin:
\addtolength{\topmargin}{9mm}

%%%%%%
%% Packages:
%% Some useful packages (and compatibility issues with the IEEE format)
%% are pointed out at the very end of this template source file (they are 
%% taken verbatim out of bare_conf.tex by Michael Shell).
%
% *** Do not adjust lengths that control margins, column widths, etc. ***
% *** Do not use packages that alter fonts (such as pslatex).         ***
%

% \usepackage[style=ieee,citestyle=numeric-comp,sorting=nty]{biblatex}
% \addbibresource{ref.bib}
\usepackage{authblk}

\usepackage[utf8]{inputenc} 
\usepackage[T1]{fontenc}
\usepackage{url}
\usepackage[cmex10]{amsmath} % Use the [cmex10] option to ensure complicance
                             % with IEEE Xplore (see bare_conf.tex)

%% Please note that the amsthm package must not be loaded with
%% IEEEtran.cls because IEEEtran provides its own versions of
%% theorems. Also note that IEEEXplore does not accepts submissions
%% with hyperlinks, i.e., hyperref cannot be used.

\interdisplaylinepenalty=2500 % As explained in bare_conf.tex

\usepackage{mleftright}       % fix to wrong spacing of \left-,
\mleftright                   % \middle- \right-commands 

\usepackage{graphicx}         % provides \includegraphics{\ldots} to
                              % include graphics (pdf format)
\usepackage{booktabs}         % fixes poor spacing in tables and
                              % provides \toprule, \midrule, \bottomrule

\usepackage{amsmath,amssymb,amsfonts,amsthm}
\usepackage{enumitem}
\usepackage{mathtools}
\usepackage{algorithmic}
\usepackage{textcomp}
\usepackage[table,xcdraw]{xcolor}
\usepackage{tikz}

\usepackage{subcaption}
\usepackage{multicol,multirow}
\usepackage{comment}

\usepackage{nopageno}

\usepackage[style=ieee,citestyle=numeric-comp,sorting=nty]{biblatex}
\addbibresource{ref.bib}

%%% Environements %%%
\newtheorem{theorem}{Theorem}

\newtheorem{proposition}{Proposition}
\newtheorem{example}{Example}
\newtheorem{lemma}{Lemma}
\newtheorem{definition}{Definition}

\newtheorem{corollary}{Corollary}

\newcommand{\bit}{\begin{itemize}}
	\newcommand{\eit}{\end{itemize}}
\newcommand{\bcor}{\begin{cor}}
	\newcommand{\ecor}{\end{cor}}
\newcommand{\beq}{\begin{equation}}
	\newcommand{\eeq}{\end{equation}}
\newcommand{\beqn}{\begin{equation}}
	\newcommand{\eeqn}{\end{equation}}
\newcommand{\bea}{\begin{eqnarray}}
	\newcommand{\eea}{\end{eqnarray}}
\newcommand{\bean}{\begin{eqnarray*}}
	\newcommand{\eean}{\end{eqnarray*}}
\newcommand{\ben}{\begin{enumerate}}
	\newcommand{\een}{\end{enumerate}}

%%% Set addition, added by Netanel:

\DeclareMathOperator*{\smallplus}{\scalerel*{+}{\textstyle\sum}}
\usepackage{scalerel}

%% For commenting

% Circle a number

%% Fonts

\newcommand{\bF}{\mathbb{F}}

\newcommand{\bN}{\mathbb{N}}

\newcommand{\bR}{\mathbb{R}}

\newcommand{\bZ}{\mathbb{Z}}

\newcommand{\cA}{\mathcal{A}}
\newcommand{\cB}{\mathcal{B}}
\newcommand{\cC}{\mathcal{C}}

\newcommand{\cF}{\mathcal{F}}

\newcommand{\cH}{\mathcal{H}}

\newcommand{\cS}{\mathcal{S}}

\newcommand{\cU}{\mathcal{U}}
\newcommand{\cV}{\mathcal{V}}

\newcommand{\boldb}{\mathbf{b}}
\newcommand{\boldc}{\mathbf{c}}

\newcommand{\bolds}{\mathbf{s}}

\newcommand{\boldu}{\mathbf{u}}
\newcommand{\boldv}{\mathbf{v}}
\newcommand{\boldw}{\mathbf{w}}
\newcommand{\boldx}{\mathbf{x}}
\newcommand{\boldy}{\mathbf{y}}

%% For imaginary unit

%% Operators and delimiters

\DeclarePairedDelimiter{\floor}{\lfloor}{\rfloor} % Use \floor* to scale.
 % Use \ceil* to scale.
 % Use \norm* to scale

% Use \norm* to scale

%%% Misc %%%
%% For blackboard 1.
\DeclareSymbolFont{bbold}{U}{bbold}{m}{n}
\DeclareSymbolFontAlphabet{\mathbbold}{bbold}
\newcommand{\1}{\mathbbold{1}}

\excludecomment{appendix}
%%%%%%
% correct bad hyphenation here

% ------------------------------------------------------------

\begin{document}
\title{Non-Binary Covering Codes \\ for Low-Access Computations} 

% %%% Single author, or several authors with same affiliation:
% \author{%
%  \IEEEauthorblockN{Andrew R.~Barron}
%  \IEEEauthorblockA{Department of Statistics and Data Science\\
%                    Yale University\\
%                    New Haven, CT, USA\\
%                    Email: andrew.barron@yale.edu}
% }

	\author{\textbf{Vinayak Ramkumar}$^{\dagger}$, \textbf{Netanel Raviv}$^\star$, and \textbf{Itzhak Tamo}$^\dagger$\\
		$^\dagger$Department of Electrical Engineering--Systems, Tel Aviv University, Tel Aviv, Israel\\
		$^\star$Department of Computer Science and Engineering, Washington University in St. Louis, St. Louis, MO, USA\\
  \texttt{vinram93@gmail.com}, \texttt{netanel.raviv@wustl.edu}, \texttt{zactamo@gmail.com}  
  \thanks{Parts of this work were done when V.~Ramkumar was a visiting researcher in the Department of Computer Science and Engineering, Washington University in St. Louis. The work of V.~Ramkumar and I.~Tamo was supported by the European Research Council (ERC grant number 852953). }
 }
\IEEEoverridecommandlockouts

%%% Several authors with up to three affiliations:
%\author{
 % \IEEEauthorblockN{Andrew R.~Barron}
  %\IEEEauthorblockA{Department of Statistics and Data Science\\
   %                 Yale University\\
    %                New Haven, CT, USA\\
     %               Email: andrew.barron@yale.edu}
  %\and
  %\IEEEauthorblockN{Claude E.~Shannon and David Slepian}
  %\IEEEauthorblockA{Bell Telephone Laboratories, Inc.\\ 
   %                 Murray Hill, NJ, USA\\
    %                Email: \{csh, dsl\}@bell-labs.com}
%}

%%% Many authors with many affiliations:
% \author{%
%   \IEEEauthorblockN{Andrew R.~Barron\IEEEauthorrefmark{1},
%                     Claude E.~Shannon\IEEEauthorrefmark{2},
%                     David Slepian\IEEEauthorrefmark{2},
%                     and Jacob Ziv\IEEEauthorrefmark{2}\IEEEauthorrefmark{3}}
%   \IEEEauthorblockA{\IEEEauthorrefmark{1}%
%                    Department of Statistics and Data Science, Yale University, New Haven, CT, USA,
%                     andrew.barron@yale.edu}
%   \IEEEauthorblockA{\IEEEauthorrefmark{2}%
%                     Bell Telephone Laboratories, Inc.,
%                     Murray Hill, NJ, USA,
%                     \{csh,dsl,jz\}@bell-labs.com}
%   \IEEEauthorblockA{\IEEEauthorrefmark{3}%
%                     Department of Electrical Engineering, Technion---Institute of Technology, Haifa, Israel,
%                     jz@ee.technion.ac.il}
% }

\maketitle

%%%%%%
%% Abstract: 
%% If your paper is eligible for the student paper award, please add
%% the comment "THIS PAPER IS ELIGIBLE FOR THE STUDENT PAPER
%% AWARD." as a first line in the abstract. 
%% For the final version of the accepted paper, please do not forget
%% to remove this comment!
%%

\begin{abstract}
%\red{[Abstract needs a little work. A bit more short concise sentences which make it sound more clear to a reader unfamiliar with the problem.]}
 Given a real dataset and a computation family, we wish to encode and store 
 the dataset in a distributed system so that any computation from the family can be performed by accessing a small number of nodes. In this work, we focus on the families of 
 linear computations where the coefficients are restricted to a finite set of real values. 
 For two-valued computations, a recent work presented a scheme that gives good feasible points on the access-redundancy tradeoff. This scheme is based on binary covering codes having a certain closure property. In a follow-up work, this scheme was extended to all finite coefficient sets, using a new additive-combinatorics notion called coefficient complexity. In the present paper, we explore non-binary covering codes and develop schemes that outperform the state-of-the-art for some coefficient sets. We provide a more general coefficient complexity definition and show its applicability to the access-redundancy tradeoff. 
 %In addition, we present a low-access two-valued computation scheme that gives some privacy guarantees. 

\end{abstract}

\begin{IEEEkeywords}
Low-access computation, distributed systems, coded computation, covering codes.
\end{IEEEkeywords}

\pagestyle{plain}

\section{Introduction}
%\red{[Shorter paragraphs!]}
%\red{[There are about a billion warnings about the references. Try Googling ``Bibtex Tidy''.]}
Distributed storage and computation allow entities with limited resources (which we will refer to as the user) to perform computations over large datasets. In such setups, the user first stores its data in the nodes of the distributed system. At a later time, when the user wants to perform a computation over this distributed data, it queries the nodes and combines their answers to get the required computation result. 

Coded computation~\cite{TandonLDK17, yu2017polynomial,yu2019lagrange} deals with using coding theoretic techniques to address issues like stragglers, privacy, and error-resiliency in the distributed system. Typically, in coded computation settings, \textit{all} the nodes storing fractions of the data are contacted to perform any computation.
Such computation tasks could lead to excessive use of system resources, and in many cases, it is beneficial to reduce the number of nodes to be contacted preemptively. Accessing fewer nodes allows the remaining nodes to cater to a different computation or execute other tasks. 

Therefore, given a dataset and a family of computations, the user wishes to encode and store the dataset in a distributed system so that any computation from this family can be performed by accessing a small number of nodes.  
%At the storage time, the user encodes the data to introduce redundancy in a manner that will be later useful in reducing the number of nodes to be accessed to perform any computation in the computation family of interest. 
At storage time, the user encodes the data in a manner that will be later useful in reducing the number of nodes to be accessed to perform any computation in the family. Optimizing such system design gives rise to an access-redundancy tradeoff, where a smaller access requires more redundancy in the system and vice versa. At one end of the tradeoff, the data is stored without any redundancy, and computations require access to all nodes (minimum storage maximum access). At the other extreme, every possible computation result is stored in a separate node and hence just one node needs to be accessed to perform the computation (maximum storage minimum access). The challenge in this line of research is to come up with coding schemes that give good feasible intermediate points for function families of interest.

%The access-redundancy tradeoff has been studied in the literature for various function families. 
%The coding theoretic covering problem and low-access computation of linear functions are closely related topics. 
It is a well-known result that the access-redundancy tradeoff for linear computations over finite fields is equivalent to the problem of constructing linear codes with a small covering radius \cite{cohen1997covering}.  The current paper investigates access-redundancy tradeoffs in quantized linear computations over real numbers. Linear real-valued computations are common in many applications, in particular as part of machine learning inference.
%In particular,
Specifically, we focus on linear computations with quantized coefficients over real data, i.e., coefficients are restricted to a predetermined set of real values (such as~$\pm1$), and no restriction is imposed over the data.

Recently, quantized linear computations have attracted much interest due to applications in robust, efficient, and private machine learning models~\cite{qin2021towards,jia2020efficient,raviv2021enhancing,raviv2022information,deng2023private}. 
The access-redundancy tradeoff for quantized linear computations has been studied in the past.  For linear computations with $\{0,1\}$ coefficients, binary covering codes were used in \cite{ho1998partial} to get attractive points on the access-redundancy tradeoff. 
%In \cite{RamkumarRT23_ISIT}, it was shown that all two-valued linear computations have the same access-redundancy tradeoff regardless of the exact values of the two coefficients. 
Several low-access schemes for two-valued computations were presented in~\cite{RamkumarRT23_ISIT} using binary covering codes having a certain closure property, and these schemes outperformed the results in \cite{ho1998partial}. More recently, a scheme was presented in \cite{RamkumarRT23_Allerton} for general coefficient sets, in which the storage remains identical irrespective of the coefficient set and the access parameter scales according to a newly introduced additive-combinatorics property called \textit{coefficient complexity}.

 The contributions of the current paper include the following. We propose a general method using covering codes over $\mathbb{F}_p$, where $p>2$ is a prime number, to come up with feasible points on the access-redundancy tradeoff for coefficient sets of size $p$ whose elements are in an arithmetic progression. For $p=3$, we construct several explicit low-access schemes by this method, and show that they outperform the previously known results for the common coefficient family $\{0,\pm1\}$. We present a general definition of coefficient complexity so that low-access schemes based on covering codes over $\mathbb{F}_p$ can be used to come up with low-access schemes for any finite coefficient set, thus extending a previous notion that exclusively pertained to~$\bF_2$. 

% %Additionally,
% As an application, this paper examines the access-redundancy tradeoffs for a private inference setting described below. 
% The user stores its data in the distributed system as before and wants to allow a different entity (called the server), that holds a $\{\pm 1\}$ coefficient vector, to obtain the  linear computation result so it can perform some inference. 
% However, the user prefers to keep the data private from the server as much as possible, and the server does not want to reveal the coefficient vector to the user either.  
% For the case where the data is not distributed, \cite{deng2023private} proposed a scheme that gives an optimal tradeoff between user-privacy and server-privacy under some reasonable assumptions. This scheme involves the server sending privatized queries ($\{\pm 1\}$ vectors) to the user and the user responding with the results of the corresponding linear computations. In the present paper, we extend the setup from~\cite{deng2023private} to a distributed setting and develop low-access schemes for these computations based on binary and ternary covering codes. 

The rest of the paper is organized as follows. In Section~\ref{sec:background}, we provide the problem statement, give a brief overview of covering codes, and summarize previous results. Section~\ref{sec:non_binary} presents a low-access computation scheme based on non-binary covering codes and Section~\ref{sec:three_valued} specializes to the $p=3$ case. %We present our low-access private computation scheme in Section~\ref{sec:privacy}. 

\section{Background} \label{sec:background}
We begin this section by defining the access-redundancy tradeoff problem for quantized linear computations, then give a brief introduction to covering codes and discuss prior results. 

For sets~$\cA,\cB\subseteq\bR$ we employ the standard set addition notation ~$\cA+\cB\triangleq\{a+b \mid a\in \cA,b\in \cB\}$. For the addition of multiple sets~$\cA_i$ we use $\smallplus_{i}\cA_i$. All vectors in this paper are row vectors, and~$\1$ is the all-ones row vector.

\subsection{Problem Statement}

Given data~$\boldx\in\bR^k$, it is encoded to~$\boldy\in\bR^n$, and stored in a distributed system with $n$ nodes such that each node stores one~$y_i\in\bR$.  We assume that the encoding uses a systematic linear code, i.e., every~$y_i$ is a linear combination of entries of~$\boldx$  and there is a node storing each $x_i$ in uncoded form.  Systematic encoding ensures it is possible to perform other tasks over the same stored data. A computation family of interest~$\cF$ is known at the encoding time. The encoding mechanism must be such that any~$f\in\cF$ can be computed by accessing at most~$\ell$ nodes, where~$\ell$ is the \textit{access parameter} of the system. The specific~$f\in\cF$ to be computed is not known at the time of encoding.
For a given $k$, it is preferable to minimize both storage~$n$ and access~$\ell$, but a clear tradeoff exists between these quantities.

This paper focuses on quantized linear computations over the real numbers.
For finite set $\cA \subseteq \bR$ (which will be referred to as the coefficient set), we define quantized linear computation family $$\cF_{\cA} \triangleq \{f(\boldx)=\boldw\boldx^\intercal\vert \boldw \in \cA^k\}.$$  
% Given a coefficient set~$\cA$, we wish to develop an $\cA$-protocol 
% which includes the storage (encoding) scheme, an access scheme that gives the identities of~$\ell$ nodes that need to be contacted for any given $f \in \cF_{\cA}$, and a method to combine their responses to obtain $f(\boldx)$.
Given a coefficient set~$\cA$, we wish to develop an $\cA$-protocol 
which includes a storage (encoding) scheme, an access scheme that identifies~$\ell$ nodes to be contacted for computing a given $f \in \cF_{\cA}$, and a method to combine their responses to obtain $f(\boldx)$.
To understand the relation between~$n,k$, and~$\ell$ of $\cA$-protocols, we focus on the redundancy ratio~$n/k$ and the access ratio~$\ell/k$ in the asymptotic regime. 
We say that the pair~$(\alpha,\beta)$ is~$\cA$-feasible if there is an infinite family of $\cA$-protocols having parameters~$\{ (k_i,n_i,\ell_i) \}_{i\ge 0}$  (with~$k_i$'s strictly increasing) satisfying~$\lim_{i\to \infty}(n_i/k_i,\ell_i/k_i)=(\alpha,\beta)$. 
%The goal of this paper is to understand the Pareto front of all~$\cA$-feasible pairs, for any given coefficient set~$\cA$. 
In \cite{RamkumarRT23_Allerton}, a protocol is called a \textit{universal protocol}  if the storage scheme does not depend on the coefficient set, %and with potentially different access ratios all possible coefficient sets can be handled. 
i.e., all possible coefficient sets can be handled (with potentially different access ratios).

\subsection{Covering Codes}\label{section:coverigncodes} The covering radius~$r$  of a code~$\cC$ of block length~$m$ over an alphabet~$\Sigma$ is defined as $r \triangleq \min\{ u \mid \cup_{\boldc\in\cC}B_H(\boldc,u)=\Sigma^m\},$
where $B_H(\boldc,u)$ is the Hamming ball of radius $u$ centered at~$\boldc$.
Many of the best-known covering codes in the literature \cite{cohen1997covering} are obtained through a framework called \textit{amalgamated direct sum} described below. Let $\cC_z^{(i)}$ be the subset of~$\cC$ containing all codewords with~$i$-th coordinate equal to~$z$. 
The norm~$N^{(i)}$ is given by
$N^{(i)} = \max_{\boldv\in \Sigma^n}\left\{ \sum_{z\in\Sigma}d_H(\boldv,\cC_z^{(i)}) \right\}$, where $d_H$ denotes the Hamming distance and $d_H(\boldv,\cS)=\min_{\bolds\in\cS}d_H(\boldv,\bolds)$.
  The~$i$-th coordinate of~$\cC$ is said to be \textit{acceptable}, and~$\cC$ is said to be \textit{normal}, if~$N^{(i)}\le (r+1)|\Sigma|-1$ for some $i$. Let~$\cU,\cV$ be codes over an alphabet~$\Sigma$ having lengths~$m_1,m_2$ and covering radii~$r_1,r_1$. Suppose the last coordinate of~$\cU$ and the first coordinate of~$\cV$ are acceptable, and~$\cU_{z}^{(m_1)},\cV_{z}^{(1)}$ are not empty~for all~$z\in\Sigma$. It is known \cite{lobstein1989normal,graham1985covering,cohen1986further} that under these conditions, the amalgamated direct sum  $\cU\dotplus\cV\triangleq \bigcup _{z\in\Sigma}\{ (\boldu,z,\boldv) \mid  (\boldu,z)\in\cU,(z,\boldv)\in \cV\}$
	results in a code of length~$m_1+m_2-1$ and covering radius at most~$r_1+r_2$.

\subsection{Prior Results}
Here we give a summary of the previously known results on access-redundancy tradeoffs in quantized linear computations.   
It is shown in \cite{ho1998partial} that if there is a code $\cC$ over $\mathbb{F}_2$ of length~$m$ and covering radius $r$, then the pair $(\frac{m+|\cC|}{m},\frac{r+1}{m})$ is $\{0,1\}$-feasible.  This protocol involves partitioning~$\boldx$ into small batches of size $m$ and the non-systematic symbols are obtained by encoding each batch with a~$\{0,1\}$-matrix whose columns are all the codewords of $\cC$. %\red{[Didn't~\cite{ho1998partial} use only~$\{0,1\}$-coefficients?]}

This scheme was recently improved in \cite{RamkumarRT23_ISIT} using binary codes having a certain closure property. Let $\cC$ be a code over~$\bF_2$ and define~$\hat{\cC}$ to be a subset of $\cC$ containing exactly one of~$\{ \boldc,\boldc\oplus \1 \}$ for every~$\boldc\in \cC$; if the code $\cC$ over $\mathbb{F}_2$ has length $m$ and covering radius $r$, then the pair $(\frac{m+|\hat{\cC}|}{m},\frac{r+1}{m})$ is $\{\pm 1\}$-feasible. The columns of the $\{\pm1\}$-matrix used for encoding correspond to all the codewords that belong to $\hat{\cC}$. If the binary code~$\cC$ is closed under complement (i.e.,~$\boldc\in\cC$ if and only if~$\boldc \oplus \1 \in\cC$), then $|\hat{\cC}|=|\cC|/2$. Codes that are closed under complement thus result in savings of almost half the storage cost and this is the crux of the improvement in \cite{RamkumarRT23_ISIT}. Several low-access are obtained in \cite{RamkumarRT23_ISIT} by constructing binary covering codes having this closure property.  Additionally, it is shown that all two-valued computations have identical feasible pairs.  

More recently~\cite{RamkumarRT23_Allerton}, a universal protocol was presented based on a new additive-combinatorics notion. For any finite set $\cA \subset \bR$, the complexity of $\cA$, denoted~$C(\cA)$, is the smallest positive integer~$\theta$ such that there exists $2\theta$ real numbers $a_1, \dots,a_{\theta}, b_1,\dots,b_{\theta}$  (not necessarily distinct) satisfying $\cA \subseteq
	\smallplus_{i=1}^{\theta} \{a_i,b_i\}.$ %The complexity of $\cA$ is denoted by $C(\cA)$. 
 If there exists an $\{\pm1\}$-protocol which gives feasible pairs $(\alpha,\beta)$, then for all finite sets $\cA \subset \bR$,  it is shown in \cite{RamkumarRT23_Allerton} that $(\alpha,C(\cA)\beta)$ is $\cA$-feasible using the same encoding scheme. 
 
\section{Schemes based on Non-Binary Codes } \label{sec:non_binary}
As discussed previously, the protocols in \cite{RamkumarRT23_ISIT} were based on \textit{binary} covering codes. In this work, we explore the possibility of getting better feasible pairs using \textit{non-binary} codes. 
The following result is a direct consequence of the definition of covering radius; nevertheless, we provide a proof for completeness. 
%in the full version of this paper \cite{non_binary_low_acess}.
\begin{theorem} \label{thm:general}
	If there exists a (not necessarily linear) code~$\cC$ of  length~$m$ and covering radius~$r$ over an alphabet $\Sigma$, then the pair~$(\frac{m+|\cC|}{m},\frac{r+1}{m})$ is~$\cA$-feasible, where $\cA \subset \bR$ is any set of size $|\Sigma|$. 
\end{theorem}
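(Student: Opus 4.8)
The plan is to mimic the batching construction behind the $\{0,1\}$-feasibility result of \cite{ho1998partial}, replacing the binary code with~$\cC$ and the coefficient set~$\{0,1\}$ by~$\cA$. Fix an arbitrary bijection~$\phi\colon\Sigma\to\cA$ and extend it coordinatewise to~$\Sigma^m\to\cA^m$, so that each codeword~$\boldc\in\cC$ is viewed as a real vector~$\phi(\boldc)\in\cA^m\subset\bR^m$; note that no linearity of~$\cC$ is used, only that its codewords are lists of coefficients. For the~$i$-th protocol take~$k_i=im$ and split the data~$\boldx\in\bR^{k_i}$ into~$i$ consecutive batches~$\boldx^{(1)},\dots,\boldx^{(i)}\in\bR^m$. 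The storage scheme keeps every coordinate of~$\boldx$ uncoded (so the storage code is systematic and linear) and, for each batch~$j$ and each codeword~$\boldc\in\cC$, additionally stores the single real number~$\phi(\boldc)(\boldx^{(j)})^\intercal$. Hence~$n_i=i(m+|\cC|)$ and~$n_i/k_i=(m+|\cC|)/m$ for every~$i$.

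Next I would describe the access scheme. Given~$f(\boldx)=\boldw\boldx^\intercal$ with~$\boldw\in\cA^{k_i}$, split~$\boldw$ into batches~$\boldw^{(1)},\dots,\boldw^{(i)}\in\cA^m$, and for each~$j$ use~$\phi^{-1}$ to regard~$\boldw^{(j)}$ as an element of~$\Sigma^m$. By the definition of covering radius there is a codeword~$\boldc^{(j)}\in\cC$ with~$d_H(\phi^{-1}(\boldw^{(j)}),\boldc^{(j)})\le r$; since~$\phi$ is a bijection, this means the real vectors~$\boldw^{(j)}$ and~$\phi(\boldc^{(j)})$ agree outside a set~$S_j\subseteq\{1,\dots,m\}$ with~$|S_j|\le r$. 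For each batch the access scheme contacts the node holding~$\phi(\boldc^{(j)})(\boldx^{(j)})^\intercal$ together with the~$|S_j|\le r$ uncoded nodes~$\{x^{(j)}_s : s\in S_j\}$, for a total of at most~$i(r+1)$ nodes; thus~$\ell_i\le i(r+1)$ and~$\ell_i/k_i\le(r+1)/m$.

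Finally I would verify that these responses determine~$f(\boldx)$. Writing~$w^{(j)}_s$ and~$c^{(j)}_s$ for the entries of~$\boldw^{(j)}$ and~$\phi(\boldc^{(j)})$, we have
\[
\boldw^{(j)}(\boldx^{(j)})^\intercal=\phi(\boldc^{(j)})(\boldx^{(j)})^\intercal+\sum_{s\in S_j}\bigl(w^{(j)}_s-c^{(j)}_s\bigr)x^{(j)}_s,
\]
because the summands with~$s\notin S_j$ vanish; summing over~$j$ recovers~$\boldw\boldx^\intercal$ as a fixed real linear combination, with coefficients known at access time, of the contacted nodes' values. Letting~$i\to\infty$ yields an infinite family of~$\cA$-protocols with parameters~$(k_i,n_i,\ell_i)$ and~$(n_i/k_i,\ell_i/k_i)\to\bigl((m+|\cC|)/m,(r+1)/m\bigr)$, which is exactly~$\cA$-feasibility.

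I do not expect a genuine obstacle here: the statement is essentially unwinding the covering-radius definition. The only points deserving a line of care are that the systematic-linear storage requirement is met (it is, by keeping~$\boldx$ uncoded and using real linear forms for the parities) and that one need not assume~$k$ is a multiple of~$m$, which is immaterial since a single increasing subsequence~$k_i=im$ suffices for the asymptotic notion of feasibility.
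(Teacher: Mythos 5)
Your proposal is correct and follows essentially the same route as the paper's proof: partition the data into batches of size~$m$, store each batch systematically together with one parity~$\phi(\boldc)\boldx^{(j)\intercal}$ per codeword (the paper phrases this as encoding by~$M=[I\mid B]$ with the codewords as columns of~$B$), and use the covering radius to correct at most~$r$ coordinates per batch via the systematic nodes. The explicit correction identity and the remark on letting~$k_i=im$ handle the asymptotics match the paper's argument.
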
  
\begin{proof}
    Let $\Sigma=\{s_1,\dots,s_p\}$ and $\cA=\{a_1,\dots,a_p\}$. Consider the representation $s_i \mapsto a_i$ for all $i \in [p]$. Let~$B \in  \cA^{m \times |\cC|}$~ be the matrix whose columns are all codewords of~$\cC$ in the above representation. 
Construct the $m\times(m+|\cC|)$ matrix  $M= \begin{bmatrix}
   I \mid B 
\end{bmatrix}$, where~$I \in \{0,1\}^{m \times m}$ is the  identity matrix.  

Consider $k=tm$, where~$t\in\bN$.  Partition~$\boldx\in\bR^{tm}$ to~$t$ parts each of size~$m$, denoted by ~$\boldx_1,\ldots,\boldx_t$. Then, encode each part into $\boldy_i=\boldx_iM$. Let~$\boldy=(\boldy_1,\ldots,\boldy_t)\in\bR^{t(m+|\cC|)}$ and distribute it to $n=t(m+|\cC|)$ nodes for storage. Clearly, the resultant redundancy ratio is~$\frac{m+|\cC|}{m}$.

The user can then compute~$\boldw\boldx^\intercal$ for any~$\boldw\in \cA^k$ as follows. Partition $\boldw$ into $t$ vectors of length~$m$ each, i.e., ~$\boldw=(\boldw_1,\ldots,\boldw_t)$.
 For each~$i\in[t]$, there exists a vector~$\boldb_i \in \cA^m$ such that $\boldw_i$ and $\boldb_i$ differ on at most $r$ entries and  $\boldb_i^\intercal$ is a column of $B$. This follows since $\cC$ has a covering radius $r$.  

 To compute~$\boldw_i\boldx_i^\intercal$, access the node storing $\boldb_i\boldx_i^\intercal$ and at most~$r$ nodes storing systematic symbols $x_{i,j}$'s such that ~$b_{i,j} \ne w_{i,j}$. Clearly,  ~$\boldw_i\boldx_i^\intercal$ can be computed using a linear combination of~$\boldb_i\boldx_i^\intercal$ and those~$x_{i,j}$'s.
  In total, at most~$t(r+1)$ nodes are accessed, leading to an access ratio of $\frac{r+1}{m}$.
\end{proof}
The challenge is to come up with schemes that outperform the feasible pairs given by Theorem~\ref{thm:general}, as was done for the binary case in \cite{RamkumarRT23_ISIT} using the aforementioned closure property.  In this paper, we study schemes based on covering codes over~$\bF_p$, where $p>2$ is a prime number, with a particular focus on ternary codes. For any prime number $p>2$, let $$\cA_p \triangleq \Big\{0, \pm1,\dots, \pm \frac{p-1}{2}\Big\}.$$ 
%\red{[A word about why~$\cA_p$ is important/interesting?]} 
%The commonly used uniform quantization~\cite{gallager2008principles} results in coefficient sets whose elements are in arithmetic progression. 
Our focus on sets of the form~$\cA_p$ is inspired by the commonly used \textit{uniform quantization}~\cite{gallager2008principles}, which results in coefficient sets whose elements are in arithmetic progression.
Later, in Proposition~\ref{proposition:equivalence}, it will be shown that all coefficient sets of size~$p$ which form an arithmetic progression have the same feasible pairs as $\cA_p$. Furthermore, Theorem~\ref{thm:complexity} will show that protocols for $\cA_p$ can be used as building blocks for a universal protocol. Motivated by these, we now devise an $\cA_p$-protocol, which reduces the redundancy in storage by using codes over~$\bF_p$ in a certain way.

  To avoid confusion, we use the notation~$\bar{\cdot}$ to emphasize that a given element (scalar or vector) is over~$\bF_p$ (rather than over~$\bR$). We use covering codes over~$$\bF_p=\{\bar{0},\bar{1},\dots, \overline{(p-1)}\}$$ in their \textit{$\bR$-representation} defined as follows:
\begin{align} \label{eq:R_representation}
\bar{0}&\mapsto 0, 
\bar{1} \mapsto 1, ~\dots~ 
\overline{\Big(\textstyle\frac{p-1}{2}\Big)} \mapsto \textstyle\frac{p-1}{2}, \nonumber \\ \overline {\Big(\textstyle\frac{p+1}{2}\Big)} &\mapsto -\textstyle\frac{p-1}{2}, ~\dots~,\overline{(p-1)} \mapsto -1. 
\end{align}
In particular for $\bF_3$, we have $\bar{0} \mapsto 0$, $\bar{1} \mapsto 1$ and $\bar{2} \mapsto -1$. 
Consider any integer $i$ such that $ 1 \le i \le \frac{p-1}{2}$. Then, clearly  $\bar{i}=-\overline{(p-i)}$   over $\bF_p$ corresponds to
  $i=-(-i)$ over $\bR$. It follows that under the above representation, the \textit{negation} operation is homomorphic. This enables reduction of storage overhead by using only one of~$\{\pm \bar{\boldc}\}$ for storage, for every~$\bar{\boldc}\in\cC$, as described below.

\begin{definition} \label{def:hatC}
	For a code~$\cC$ over~$\bF_p$, let~$\cS\subseteq \cC$ be all the codewords of Hamming weight at most~$1$, and let~$\tilde{\cC}\subseteq \cC\setminus \cS$ which contains exactly one of~$\{ \pm \bar{\boldc} \}$, for every~$\bar{\boldc} \in \cC \setminus 
	\cS$. That is, first remove all codewords from~$\cC$ that are of Hamming weight at most~$1$. Then, if~$\{ \pm \bar{\boldc} \}$ remain for some~$\bar{\boldc}$, remove one of them, and denote the resulting set by~$\tilde{\cC}$. 
	%When the code~$\cC$ is clear from the context, we denote~$\tilde{c}\triangleq|\tilde{\cC}|$.
\end{definition}

 There are multiple ways of generating~$\tilde{\cC}$ from~$\cC$ which result in codes of identical sizes, and we fix one such way arbitrarily.
 \begin{example} \label{example:hamming}
 	Consider the~$[4,2]_3$ Hamming code $\cC$ over~$\bF_3$. The $2 \times 4$ matrix  below is a generator matrix of this code:
 	\begin{align*} 
 		\begin{pmatrix}
 			\bar{0} & \bar{1} & \bar{1} & \bar{1} \\
 			\bar{1} & \bar{0} & \bar{1} & \bar{2} 
 		\end{pmatrix}.
 	\end{align*}
 This code has the following $9$ codewords:  
 \bean 
 \{\bar{0}\bar{0}\bar{0}\bar{0}, \bar{1}\bar{0}\bar{1}\bar{2}, \bar{2}\bar{0}\bar{2}\bar{1}, \bar{0}\bar{1}\bar{1}\bar{1}, \bar{1}\bar{1}\bar{2}\bar{0}, \bar{2}\bar{1}\bar{0}\bar{2}, \bar{0}\bar{2}\bar{2}\bar{2}, \bar{1}\bar{2}\bar{0}\bar{1}, \bar{2}\bar{2}\bar{1}\bar{0}\}. 
 \eean 
Here $\cS=\{\bar{0}\bar{0}\bar{0}\bar{0}\}$ and $|\cC \setminus \cS|=8$. For every $\bar{\boldc} \in \cC \setminus \cS$, we have $-\bar{\boldc} \in \cC \setminus \cS$. Therefore, $|\tilde{\cC}|=4$. It can also be seen that this code has a covering radius of $1$.  
 \end{example}
Based on the above definition, we state the following theorem which gives an improvement over Theorem~\ref{thm:general}. 

%A proof is given in the appendix.
\begin{theorem}\label{theorem:Fp}
	Let $p>2$ be a prime number.  
	If there exists a (not necessarily linear) code~$\cC$ of length~$m$ and covering radius~$r$ over~$\bF_p$, then the pair~$(\frac{m+|\tilde{
			\cC}|}{m},\frac{r+1}{m})$ is~$\cA_p$-feasible.
\end{theorem}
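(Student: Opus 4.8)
The plan is to follow the proof of Theorem~\ref{thm:general} almost verbatim, but to build the encoding matrix out of $\tilde{\cC}$ rather than out of all of $\cC$, and then to check that discarding the columns corresponding to $\cS$ and to one element of each pair $\{\pm\bar{\boldc}\}$ costs nothing in the access parameter. Throughout I would write $\phi\colon\bF_p\to\cA_p$ for the $\bR$-representation of~\eqref{eq:R_representation}, extended coordinatewise to vectors; the only facts about $\phi$ needed are the ones already noted before the statement, namely that $\phi$ is a bijection, that it preserves Hamming distance, and that it intertwines negation, $\phi(-\bar{\boldv})=-\phi(\bar{\boldv})$.

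For the storage scheme I would take $B\in\cA_p^{m\times|\tilde{\cC}|}$ to be the matrix whose columns are $\{\phi(\bar{\boldc})^\intercal\mid\bar{\boldc}\in\tilde{\cC}\}$, put $M=\begin{bmatrix}I\mid B\end{bmatrix}$ with $I$ the $m\times m$ identity, fix $k=tm$ with $t\in\bN$, partition $\boldx$ into blocks $\boldx_1,\dots,\boldx_t\in\bR^m$, encode each block as $\boldy_i=\boldx_iM$, and store the resulting $t(m+|\tilde{\cC}|)$ real symbols one per node. This already gives redundancy ratio $\frac{m+|\tilde{\cC}|}{m}$ and keeps every $x_{i,j}$ in uncoded form.

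The new content is in the access scheme. Given $\boldw\in\cA_p^k$ partitioned into $\boldw_1,\dots,\boldw_t$, I would fix a block $i$ and use the covering radius to choose $\bar{\boldc}\in\cC$ with $d_H(\phi^{-1}(\boldw_i),\bar{\boldc})\le r$; then $\boldc\triangleq\phi(\bar{\boldc})$ differs from $\boldw_i$ only on a set $D\subseteq[m]$ with $|D|\le r$, so $\boldw_i\boldx_i^\intercal=\boldc\boldx_i^\intercal+\sum_{j\in D}(w_{i,j}-c_j)x_{i,j}$, and the block is computable from the at most $r$ systematic nodes indexed by $D$ together with one node delivering $\boldc\boldx_i^\intercal$. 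It remains to produce $\boldc\boldx_i^\intercal$ from a single node in all cases: if $\bar{\boldc}\in\cS$ has weight $0$ it equals $0$; if $\bar{\boldc}\in\cS$ has weight $1$, supported at coordinate $j$, it equals $c_jx_{i,j}$ and comes from one systematic node; and if $\bar{\boldc}\notin\cS$ then by Definition~\ref{def:hatC} exactly one of $\bar{\boldc},-\bar{\boldc}$ lies in $\tilde{\cC}$, so either $\boldc\boldx_i^\intercal$ is stored directly or $\phi(-\bar{\boldc})\boldx_i^\intercal=-\boldc\boldx_i^\intercal$ is stored and $\boldc\boldx_i^\intercal$ is recovered by a sign flip. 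Hence at most $r+1$ nodes per block, at most $t(r+1)$ overall, access ratio $\frac{r+1}{m}$; letting $t\to\infty$ (so $k=tm\to\infty$) yields $\cA_p$-feasibility of $\bigl(\frac{m+|\tilde{\cC}|}{m},\frac{r+1}{m}\bigr)$.

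I do not anticipate a real obstacle: like Theorem~\ref{thm:general}, the statement is in essence a repackaging of the covering-radius property, and the two ideas that drive it --- that one representative per $\pm$ pair suffices because $\phi$ carries negation to negation, and that weight-$\le 1$ codewords are redundant once the systematic block is stored --- are already isolated in the discussion preceding the theorem. The one spot calling for a little care is the case analysis for $\boldc\boldx_i^\intercal$: one should check that in the weight-$1$ subcase the extra systematic node, whether or not it already belongs to $D$, still leaves the per-block count at $r+1$, and that the sign flip in the last subcase is justified by $\phi(-\bar{\boldc})=-\phi(\bar{\boldc})$ over $\bR$ (the stored symbols being real numbers, not field elements).
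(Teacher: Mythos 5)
Your proposal is correct and follows essentially the same route as the paper's proof: identical storage matrix $M=[I\mid B]$ built from the $\bR$-representations of $\tilde{\cC}$, the same use of the covering radius per block, and the same case split on whether the chosen codeword was discarded for having weight at most $1$ or for being the negative of a retained codeword. The only cosmetic difference is in the low-weight case, where the paper bounds $w_H(\bar{\boldw}_i)\le r+1$ by the triangle inequality and reads everything off systematic nodes, while you count the one systematic node supporting $\bar{\boldc}$ separately from the at most $r$ disagreement nodes --- both give the same $r+1$ per-block bound.
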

\begin{proof}
Let $\tilde{c} \triangleq |\tilde{\cC}|$, and let~$B \in  \cA_p^{m \times \tilde{c}}$~ be the matrix whose columns are all vectors in~$\tilde{\cC}$ in their $\bR$-representation~\eqref{eq:R_representation}. That is, for each~$\bar{\boldc}\in\tilde{\cC}$, the matrix~$B$ contains as column the~$\bR$-representation~$\boldc$ of~$\bar{\boldc}$.
Construct the $m\times(m+ \tilde{c})$ matrix  $M= \begin{bmatrix}
   I \mid B 
\end{bmatrix}$, where~$I \in \{0,1\}^{m \times m}$ is the  identity matrix. 

Consider $k=tm$, where~$t\in\bN$.  Partition~$\boldx\in\bR^{tm}$ into~$t$ parts each of size~$m$, denoted by ~$\boldx_1,\ldots,\boldx_t$. Then, encode each part $\boldx_i$ into $\boldy_i=\boldx_iM$. Let~$\boldy=(\boldy_1,\ldots,\boldy_t)\in\bR^{t(m+\tilde{c})}$ and distribute $\boldy$ to the $n=t(m+\tilde{c})$ nodes for storage. The resultant redundancy ratio is~$\frac{m+\tilde{c}}{m}$.

The user can then compute~$\boldw\boldx^\intercal$ for any~$\boldw\in \cA_p^k$ as follows. Partition $\boldw$ into $t$ vectors of length~$m$ each, i.e., ~$\boldw=(\boldw_1,\ldots,\boldw_t)$.
 Since $\cC$ is of covering radius $r$, for each~$i\in[t]$, there is a codeword
 $\bar{\boldc}_i\in\cC$ of Hamming distance at most~$r$ from the~$\bF_p$-representation~$\bar{\boldw}_i$ of~$\boldw_i$, i.e., $d_H(\bar{\boldc}_i,\bar{\boldw}_i) \le r$. 
 The~$\bR$-representation of $\bar{\boldc}_i$, which is denoted by $\boldc_i$, thus differs from $\boldw_i$ on at most $r$ entries. Define $\Delta_{i} \triangleq \{j \in [m] \mid c_{i,j} \ne w_{i,j}\}$, the set of indices where $\boldw_i$ and $\boldc_i$ do not match. It follows from the above discussion that $|\Delta_{i}| \le r$, for all $i \in [t]$. We consider two cases based on whether or not there is a node storing~$\boldc_i\boldx_i^\intercal$.
	 \begin{enumerate}
	 	\item[(a)] If~$\bar{\boldc}_i\in\hat{\cC}$, then there is a node storing $\boldc_i\boldx_i^\intercal$. To compute $\boldw_i\boldx_i^\intercal$, the user accesses ~$\boldc_i\boldx_i^\intercal$, and at most $r$ systematic symbols $x_{i,j}$'s with $j \in \Delta_i$. Then,~$\boldw_i\boldx_i^\intercal$ can be computed as a linear combination of~$\boldc_i\boldx_i^\intercal$ and those~$x_{i,j}$'s as given below. 
   \begin{align*}
    &&  \boldc_i\boldx_i^\intercal + \sum_{j \in \Delta_i}(w_{i,j}-c_{i,j})x_{i,j} \hspace{3cm}\\
      &=& \sum_{j \in \Delta_i}c_{i,j}x_{i,j} + \sum_{j \notin \Delta_i}c_{i,j}x_{i,j}+\sum_{j \in \Delta_i}(w_{i,j}-c_{i,j})x_{i,j} \\ 
   %   &=&   \sum_{j \in \Delta_j}(c_{i,j}+w_{i,j}-c_{i,j})x_{i,j} + \sum_{j \notin \Delta_j}w_{i,j}x_{i,j} \hspace{1cm} \\ 
      &=&  \sum_{j \in [m]}w_{i,j}x_{i,j}=\boldw_i\boldx_i^\intercal. \hspace{4cm}
   \end{align*}
	 	\item[(b)] If~$\bar{\boldc}_i\notin \hat{\cC}$ then either~$w_H(\bar{\boldc}_i)\le 1$ or~$-\bar{\boldc}_i\in\hat{\cC}$. In the former case,
   by triangle inequality $w_H(\bar{\boldw}_i) \le r+1$ and hence~$\boldw_i\boldx_i^\intercal$ can be computed using at most $r+1$ systematic symbols. 
   In the latter case, access the node storing~$-\boldc_i\boldx^\intercal$ and negate it to get~$\boldc_i\boldx^\intercal$.
    At most~$r$ systematic nodes storing~$x_{i,j}$'s with $j \in \Delta_i$ also need to be accessed. Then, $\boldw_i\boldx_i^\intercal$ can be computed by linear combination of~$\boldc_i\boldx_i^\intercal$ and those~$x_{i,j}$'s as described above. 
	 \end{enumerate}	
  To conclude the computation, obtain $\boldw\boldx^\intercal=\sum_{i \in [t]} \boldw_i\boldx_i^\intercal$. 
  In total, at most~$t(r+1)$ nodes are accessed, leading to an access ratio of $\frac{r+1}{m}$.
\end{proof}
It can be seen that for coefficient set $\{0,\pm1\}$, using  the $[4,2]_3$ Hamming code in Example~\ref{example:hamming} as $\cC$, Theorem~\ref{thm:general}  gives feasible pair $(3.25, 0.5)$, whereas Theorem~\ref{theorem:Fp} gives a better feasible pair $(2,0.5)$.
The following corollary specializes Theorem~\ref{theorem:Fp} to the ternary case. 
\begin{corollary}\label{cor:F3}
	If there exists a (not necessarily linear) code~$\cC$ of length~$m$ and covering radius~$r$ over~$\bF_3$, then the pair~$(\frac{m+|\tilde{
			C}|}{p},\frac{r+1}{m})$ is~$\{0,\pm1\}$-feasible.
\end{corollary}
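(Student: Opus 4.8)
The plan is to observe that this corollary is nothing more than the $p=3$ instance of Theorem~\ref{theorem:Fp}, so essentially no new argument is required. First I would unpack the coefficient set: by the definition $\cA_p \triangleq \{0, \pm 1, \dots, \pm\tfrac{p-1}{2}\}$, substituting $p=3$ gives $\cA_3 = \{0, \pm 1\}$, since the largest available magnitude is $\tfrac{p-1}{2} = 1$. As $3$ is a prime with $3>2$, the hypotheses of Theorem~\ref{theorem:Fp} are satisfied as soon as we are handed a (not necessarily linear) code $\cC$ of length $m$ and covering radius $r$ over $\bF_3$.

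Then I would simply invoke Theorem~\ref{theorem:Fp} with $p=3$: it asserts that the pair $\bigl(\tfrac{m+|\tilde{\cC}|}{m}, \tfrac{r+1}{m}\bigr)$ is $\cA_3$-feasible, and under the identification $\cA_3 = \{0,\pm1\}$ this is precisely the statement of the corollary. (The redundancy ratio displayed in the corollary statement should read $\tfrac{m+|\tilde{\cC}|}{m}$ rather than $\tfrac{m+|\tilde{\cC}|}{p}$ — this is a typographical slip, consistent with the fact that in the construction of Theorem~\ref{theorem:Fp} each length-$m$ batch $\boldx_i$ is encoded into $m+|\tilde{\cC}|$ stored symbols.)

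I do not expect any real obstacle here: the only care needed is the bookkeeping identity $\cA_3 = \{0,\pm1\}$ and noting the typo. If one wanted the corollary to be self-contained, one could additionally specialize the construction of Theorem~\ref{theorem:Fp} to $\bF_3$ explicitly — recording that the $\bR$-representation \eqref{eq:R_representation} becomes $\bar 0 \mapsto 0$, $\bar 1 \mapsto 1$, $\bar 2 \mapsto -1$, that negation over $\bF_3$ corresponds to negation over $\bR$ under this map, and that $\tilde{\cC}$ is formed by deleting the zero and weight-one codewords of $\cC$ and then retaining one member of each antipodal pair $\{\pm\bar{\boldc}\}$ — but this merely repeats the proof already given for general $p$ and is not needed.
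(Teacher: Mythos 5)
Your proposal is correct and matches the paper exactly: the paper gives no separate proof for this corollary, treating it as the immediate $p=3$ specialization of Theorem~\ref{theorem:Fp} with $\cA_3=\{0,\pm1\}$, which is precisely your argument. Your observation that the denominator $p$ in the stated redundancy ratio is a typo for $m$ is also right.
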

In the next section, we will present feasible pairs obtained using a few covering codes over $\mathbb{F}_3$ according to the above method and compare them against known results \cite{RamkumarRT23_ISIT,RamkumarRT23_Allerton}. We remark that codes over larger alphabets will lead to very large~$\alpha$, which may not be practically viable. 

Next, we focus on coming up with a universal protocol that uses non-binary covering codes as building blocks. 
For two-valued computations, it was shown in \cite{RamkumarRT23_ISIT} that the feasible pairs are identical irrespective of the coefficient values.   
However, for computations that require three or more coefficients such generality does not seem to hold in general, and yet, some restricted generality does hold. Specifically, the proposition below shows that all coefficient sets of size prime $p>2$ that form an arithmetic progression have identical feasible pairs. 
\begin{proposition}\label{proposition:equivalence}
	Let $p>2$ be a prime number and let $\cA$ be a set of size $p$ such that the elements of $\cA$ are in an arithmetic progression. 
	A pair~$(\alpha,\beta)$ is~$\cA$-feasible if and only if it is $\cA_p$-feasible.  Furthermore, data stored using a protocol for~$\cA_p$ can be used to retrieve~$\boldw\boldx^\intercal$ for any~$\boldw\in \cA^k$ by using at most one additional node. %with at most one additional query.
\end{proposition}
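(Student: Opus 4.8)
The plan is to reduce the statement about an arbitrary size-$p$ arithmetic progression $\cA$ to the already-established results for $\cA_p$ via an affine change of coefficients. Write $\cA = \{d + j\cdot e \mid j = -\tfrac{p-1}{2}, \dots, \tfrac{p-1}{2}\}$ for some base point $d\in\bR$ and common difference $e\neq 0$; equivalently, $\cA = d\cdot\1 + e\cdot\cA_p$ in the set-arithmetic notation, so that every coefficient vector $\boldw\in\cA^k$ can be written uniquely as $\boldw = d\cdot\1 + e\cdot\boldw'$ with $\boldw'\in\cA_p^k$. The computation $\boldw\boldx^\intercal$ then decomposes as $\boldw\boldx^\intercal = e\,(\boldw'\boldx^\intercal) + d\,(\1\boldx^\intercal) = e\,(\boldw'\boldx^\intercal) + d\sum_{i} x_i$.

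For the ``only if'' direction I would take any $\cA$-protocol and turn it into an $\cA_p$-protocol with the same parameters: given $\boldw'\in\cA_p^k$, form $\boldw = d\cdot\1 + e\cdot\boldw' \in \cA^k$, run the $\cA$-protocol to obtain $\boldw\boldx^\intercal$, and then recover $\boldw'\boldx^\intercal = \tfrac{1}{e}\big(\boldw\boldx^\intercal - d\sum_i x_i\big)$. The subtlety is that $\sum_i x_i$ must itself be computable cheaply; but this is just $\1\boldx^\intercal$, the all-ones computation, which lies in $\cF_\cA$ as well (it corresponds to coefficient vector $\boldw = \1$, and since $1\in\cA$ exactly when $0$ is the midpoint — which need not hold), so instead I should note $\1\in\cA_p^k$ trivially and hence the all-ones sum is obtainable within the $\cA_p$-protocol itself, or alternatively build $\sum_i x_i$ once into the storage as a single extra symbol. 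The cleanest route: observe that asymptotically one extra stored linear combination contributes $0$ to $\alpha$ and one extra accessed node contributes $0$ to $\beta$, so feasibility is preserved in both directions. The ``if'' direction is symmetric: from an $\cA_p$-protocol, given $\boldw\in\cA^k$, write $\boldw' = \tfrac{1}{e}(\boldw - d\cdot\1)\in\cA_p^k$, run the $\cA_p$-protocol on $\boldw'$, and reconstruct $\boldw\boldx^\intercal = e\,(\boldw'\boldx^\intercal) + d\sum_i x_i$, again needing only the single extra quantity $\sum_i x_i$.

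This last observation is exactly the ``furthermore'' claim, and I would prove it directly rather than through the asymptotic argument: augment the storage of an $\cA_p$-protocol with one additional node holding $\sum_{i\in[k]} x_i$ (a single real number, hence negligible redundancy). Then for any $\boldw\in\cA^k$, access the $\ell$ nodes prescribed by the $\cA_p$-protocol for $\boldw' = \tfrac{1}{e}(\boldw-d\cdot\1)$ together with that one extra node, giving $\ell+1$ nodes total, and output $e\,(\boldw'\boldx^\intercal) + d\cdot(\sum_i x_i)$. Since $\boldw = d\cdot\1 + e\cdot\boldw'$, this equals $\boldw\boldx^\intercal$, establishing the claim with exactly one additional node.

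The main obstacle is bookkeeping around the ``at most one additional node'' precision and ensuring the reduction does not secretly need to access more than a constant number of extra systematic nodes to form $\sum_i x_i$ on the fly — which is why pre-storing that single sum is the right design choice, since in the data-partitioned protocols of Theorems~\ref{thm:general} and~\ref{theorem:Fp} the sum $\sum_i x_i$ is not a single stored symbol but spread across batches. One must either store the global sum as one extra real value (costing $n \to n+1$, which vanishes in the ratio) or exploit that within each batch the all-ones vector is itself a valid $\cA_p$-coefficient pattern so the per-batch sums are already accessible; I would go with the former for cleanliness. Beyond that, the proof is routine linear algebra over $\bR$, with the only genuine content being the affine identification $\cA = d\cdot\1 + e\cdot\cA_p$ and the homomorphism it induces on the computation families.
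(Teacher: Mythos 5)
Your proposal is correct and follows essentially the same route as the paper's proof: both directions reduce to the affine identification $\cA = (\text{shift})\cdot\1 + (\text{common difference})\cdot\cA_p$, with one extra stored node holding $\1\boldx^\intercal$ supplying the shift term, and the asymptotic definition of feasibility absorbing the $+1$ in both storage and access. Your hesitation about how to obtain $\sum_i x_i$ is resolved exactly as in the paper (store it as one additional node if not already present), so no gap remains.
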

\begin{proof}
Let $\cA=\{a_1,\dots,a_p\}$ with $a_{i+1}-a_{i}=d > 0$, for all $i \in [p-1]$. 
    Given a protocol for~$\cA_p$, store~$\boldx$ according to it, with an additional node containing~$\1 \boldx^\intercal$ if not already present. To compute any given~$\boldw\in \cA^k$, consider~$\boldw'\in \cA_p^k$ defined as follows. 
If $w_j=a_{i}$, then $$w'_j \triangleq -\Big(\frac{p-1}{2}\Big)+i-1.$$ 
Using the given $\cA_p$-protocol for retrieving~$\boldw'\boldx^\intercal$, and accessing~$\1\boldx^\intercal$, the user can compute~
\begin{align*}
  && d\cdot \boldw'\boldx^\intercal+\Big(\frac{p-1}{2}d+a_1\Big)\cdot \1\boldx^\intercal \hspace{2cm} \\ 
&=&\sum_{i \in [p]}\sum_{j\vert w_j=a_i} \Big(-\Big(\frac{p-1}{2}\Big)d+(i-1)d+\frac{p-1}{2}d+a_1\Big) x_j  \\
&=&\sum_{i \in [p]}\sum_{j\vert w_j=a_i} \Big((i-1)d+a_1\Big) x_j \hspace{3.3cm} \\
&=&\sum_{i \in [p]}\sum_{j\vert w_j=a_i} a_ix_j  
= \boldw\boldx^\intercal. \hspace{4.2cm}  
\end{align*}
   
In the other direction, given a protocol for~$\cA$, store~$\boldx$ according to it, with an additional node containing~$\1\boldx^\intercal$ if not already present. To compute any given~$\boldw\in \cA_p^k$, consider~$\boldw'\in \cA^k$ defined as follows. 
    If $w_j=i$, then $$w'_j \triangleq a_{1}+\Big(\frac{p-1}{2}+i\Big)d.$$ 
Note that this is equivalent to $w'_j=a_{u}$, where $u=\frac{p-1}{2}+i+1$, and hence $\boldw \in \cA^k$. 
Using  the given $\cA$-protocol for retrieving~$\boldw'\boldx^\intercal$, and accessing~$\1\boldx^\intercal$, the user can compute
    \begin{align*}
  && 
  \frac{1}{d}\cdot \boldw'\boldx^\intercal-\Big(\frac{p-1}{2}+\frac{a_1}{d}\Big)\cdot \1\boldx^\intercal
  \hspace{1.5cm} \\ 
&=&\sum_{i \in \cA_p}\sum_{j\vert w_j=i} \Big(\frac{a_1}{d}+\frac{p-1}{2}+i-\frac{p-1}{2}-\frac{a_1}{d}\Big) x_j  \\
&=&\sum_{i \in \cA_p}\sum_{j\vert w_j=i} ix_j  
= \boldw\boldx^\intercal. \hspace{3.2cm}  
\end{align*}
At most one additional node is used for storage and access in both cases. Since feasibility is defined in an asymptotic way, it follows that the feasible pairs are identical for $\cA$ and $\cA_p$. The furthermore part follows directly from the first part of the proof. 
\end{proof}

Based on Proposition~\ref{proposition:equivalence}, we provide the following general definition of coefficient complexity. 
\begin{definition} \label{def:complexity}
For any finite set $\cA \subset \bR$ and any positive integer $p \ge 2$, the $p$-complexity~$C_p(\cA)$ of $\cA$ is the smallest positive integer~$\theta$ such that there exist $\theta$ many sets  $S_1, \dots,S_{\theta} \subset \bR$  (not necessarily distinct) satisfying  the following: 
\bit
\item   $\cA \subseteq
\smallplus_{i=1}^{\theta} S_i$ and 
\item  $|S_i|=p$ and the elements of  $S_i$ are in arithmetic progression, for all $i \in [\theta]$. 
\eit 
 %The $p$-complexity of $\cA$ is denoted by $C_p(\cA)$. 

 \begin{example} \label{example:C3}
    For $\cA=\{0,1,2,3,4,5,6,7,8\}$ it is clear that $\cA \subseteq \{0,1,2\}+ \{0,3,6\}$. Since $\{0,1,2\}$ and $\{0,3,6\}$ are arithmetic progressions of size three,
    %, and for both sets the elements form arithmetic progressions. I
    it follows that~$C_3(\cA)=2$. For $\cA=\{0,1,2,3,4,5,6,7,9\}$, we have $\cA \subseteq \{0,1,2\}+ \{0,3,6\}+\{0,9,18\}$, and hence $C_3(\cA) \le 3$.
\end{example}

\end{definition}
It can be easily verified that for $p=2$ Definition~\ref{def:complexity} specifies to~\cite[Def.~1]{RamkumarRT23_Allerton}.
%becomes the same as that in \cite{RamkumarRT23_Allerton}. 
Using arguments similar to those in~\cite{RamkumarRT23_Allerton}, the following upper and lower bounds can derived for $p$-complexity $C_p(\cA)$.
%as shown in the full version~\cite{non_binary_low_acess}. 
\begin{lemma} \label{lem:bound}
	For all finite sets $\cA \subset \bR$ of size at least two,   we have $\lceil \log_p(|\cA|) \rceil  \le C_p(\cA) \le |\cA|-1$. 
\end{lemma}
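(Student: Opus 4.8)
# Proof Proposal for Lemma~\ref{lem:bound}

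The plan is to establish the two inequalities separately, mirroring the corresponding argument for $p = 2$ in~\cite{RamkumarRT23_Allerton} but adapting it to the larger building blocks of size $p$.

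For the \textbf{upper bound} $C_p(\cA) \le |\cA| - 1$, I would proceed constructively. Write $\cA = \{a_1, \dots, a_N\}$ with $N = |\cA|$. The idea is to build $\cA$ as a subset of a sumset of $N-1$ arithmetic progressions of size $p$, one for each element $a_2, \dots, a_N$ (with $a_1$ serving as a common offset). Concretely, for each $j \in \{2, \dots, N\}$ set $S_j \triangleq \{0, a_j - a_1, 2(a_j - a_1), \dots, (p-1)(a_j - a_1)\}$, which is an arithmetic progression of size $p$ (size exactly $p$ as long as $a_j \neq a_1$; if the difference is zero one can perturb, but since the $a_j$ are distinct this does not arise), and one also adds the constant offset $a_1$ into (say) $S_2$ by translating it. Then $a_1 + 0 + \dots + 0 = a_1 \in \smallplus_j S_j$, and for $i \ge 2$, picking the term $(a_i - a_1)$ from $S_i$ and $0$ from all other $S_j$ yields $a_1 + (a_i - a_1) = a_i$. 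Hence $\cA \subseteq \smallplus_{j=2}^{N} S_j$, giving $C_p(\cA) \le N - 1$. I would need to handle the bookkeeping of where to place the additive offset $a_1$ cleanly, but this is routine.

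For the \textbf{lower bound} $\lceil \log_p(|\cA|) \rceil \le C_p(\cA)$, the key observation is a cardinality bound on sumsets: if $\cA \subseteq \smallplus_{i=1}^{\theta} S_i$ with each $|S_i| = p$, then
\begin{equation*}
|\cA| \le \Big| \smallplus_{i=1}^{\theta} S_i \Big| \le \prod_{i=1}^{\theta} |S_i| = p^{\theta},
\end{equation*}
since a sumset of sets of sizes $p$ has at most $p^\theta$ elements (each element of the sumset is a sum of one representative from each $S_i$, so there are at most $p^\theta$ distinct values). Taking logarithms base $p$ gives $\theta \ge \log_p(|\cA|)$, and since $\theta$ is an integer, $\theta \ge \lceil \log_p(|\cA|) \rceil$. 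Applying this to $\theta = C_p(\cA)$ finishes the bound. Note that the arithmetic-progression structure of the $S_i$ plays no role here — only their size does — which is why the proof is short.

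The \textbf{main obstacle}, such as it is, lies in the upper bound: one must make sure each $S_i$ genuinely has size $p$ and is a genuine arithmetic progression (the definition demands $|S_i| = p$ exactly), and that the single global additive shift by $a_1$ is absorbed into the construction without breaking the arithmetic-progression property of that one block — e.g., by using the translated block $a_1 + \{0, a_2 - a_1, \dots, (p-1)(a_2-a_1)\}$, which is still an arithmetic progression. Everything else is a direct counting argument. I would present the lower bound first (it is the cleaner half) and then the construction for the upper bound.
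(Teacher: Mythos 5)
Your proposal is correct and follows essentially the same route as the paper: the lower bound is the identical sumset-cardinality argument $|\cA|\le p^{\theta}$, and the upper bound is the same constructive idea of $|\cA|-1$ arithmetic progressions each containing $0$ and one difference $a_i-a_{\mathrm{anchor}}$, with one progression carrying the anchor value (the paper anchors at $a_M$ via $S_{M-1}$ rather than translating a block by $a_1$, a cosmetic difference). No gaps.
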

\begin{proof}
     Let $C_p(\cA)=\theta$. For any collection of sets  $S_1, \dots,S_{\theta} \subset \bR$	each of size $p$, it can be seen that
		\begin{equation*}
\left|\smallplus_{i=1}^{\theta} S_i \right|
 \le p^{\theta}.  
		\end{equation*}
Therefore, $|\cA| \le p^{\theta}$. The lower bound follows.   
 
To obtain the upper bound, without loss of generality, let $\cA=\{a_1,\dots,a_M\}$ with $a_1<\dots<a_M$.
Define $(M-1)$ many $p$-element sets in the following fashion: 
 $$S_i=\{a_i-a_M+(a_M-a_i)j \mid j \in \bZ, 0 \le j \le p-1\}$$
 for all $i\in [M-2]$ and
$$S_{M-1}=\{a_{M-1}+(a_M-a_{M-1})j \mid j \in \bZ, 0 \le j \le p-1\}.$$
Clearly, elements of $S_i$ are in arithmetic progression. 
Observe that $a_{i}-a_{M} \in S_i$ for all $i \in [M-2]$ and that $a_{M} \in S_{M-1}$. 
Then, for any $i \in [M-2]$, we have
$$a_i=(a_i-a_M+a_M) \in S_{i}+S_{M-1}.$$ Since $0 \in  S_i$ for all $i \in [M-2]$, it follows that $$a_i \in \smallplus_{j=1}^{M-1} S_j$$ for all $i \in [M-2]$. Also note that $\{a_{M-1},a_{M} \} \subseteq S_{M-1}$ results in $$\{a_{M-1},a_{M} \} \subseteq \smallplus_{j=1}^{M-1} S_j.$$ Thus, we have shown that $\cA \subseteq \smallplus_{j=1}^{M-1} S_j$, proving the upper bound. 
\end{proof}
The following theorem establishes the usefulness of our new $p$-complexity definition by providing a universal protocol. 
%See the full version \cite{non_binary_low_acess} for a proof. 
\begin{theorem} \label{thm:complexity}
Let $p>2$ be a prime number. If there exists an $\cA_p$-protocol with feasible pair $(\alpha,\beta)$, then $(\alpha,C_p(\cA)\beta)$ is $\cA$-feasible using the same encoding scheme, for all finite sets $\cA \subset \bR$. 
\end{theorem}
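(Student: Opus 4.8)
The plan is to combine the arithmetic-progression decomposition supplied by the definition of $C_p(\cA)$ with the ``furthermore'' part of Proposition~\ref{proposition:equivalence}. Write $\theta \triangleq C_p(\cA)$ and fix sets $S_1,\dots,S_\theta \subset \bR$, each of size $p$ and in arithmetic progression, with $\cA \subseteq \smallplus_{i=1}^{\theta} S_i$. For every $a \in \cA$ fix once and for all a decomposition $a = \sum_{i=1}^{\theta}\phi_i(a)$ with $\phi_i(a)\in S_i$; this is possible precisely because $\cA \subseteq \smallplus_{i=1}^{\theta}S_i$, and since $\cA$ is finite there are only finitely many such choices to pin down.

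For the storage scheme I would take an infinite family of $\cA_p$-protocols with parameters $(k_j,n_j,\ell_j)$ and $\lim_{j\to\infty}(n_j/k_j,\ell_j/k_j)=(\alpha,\beta)$, and for each member store $\boldx$ exactly as that $\cA_p$-protocol prescribes, together with one extra node holding $\1\boldx^\intercal$ if it is not already present. This extra node is independent of $\cA$, so ``the same encoding scheme'' is respected (as in the $p=2$ statement recalled in the prior-results section and in Proposition~\ref{proposition:equivalence}); the block length becomes $n_j+1$, so the redundancy ratio still tends to $\alpha$.

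For the access scheme, given a query $\boldw\in\cA^k$, set $w^{(i)}_j \triangleq \phi_i(w_j)$ for each $j\in[k]$ and $i\in[\theta]$, so that $\boldw^{(i)}\in S_i^k$ and $\boldw=\sum_{i=1}^{\theta}\boldw^{(i)}$, whence $\boldw\boldx^\intercal=\sum_{i=1}^{\theta}\boldw^{(i)}\boldx^\intercal$. Since each $S_i$ is a size-$p$ arithmetic progression, the ``furthermore'' part of Proposition~\ref{proposition:equivalence} applies: the stored $\cA_p$-protocol data together with the node $\1\boldx^\intercal$ lets the user recover each $\boldw^{(i)}\boldx^\intercal$ by accessing at most the $\ell_j$ nodes the underlying $\cA_p$-protocol would access, plus the single node $\1\boldx^\intercal$. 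Summing the $\theta$ recovered values gives $\boldw\boldx^\intercal$. In total at most $\theta\ell_j+1$ nodes are accessed (the $\theta$ calls to $\1\boldx^\intercal$ coincide; even counting them separately adds only the constant $\theta$), so the access ratio tends to $\theta\beta=C_p(\cA)\beta$, establishing that $(\alpha,C_p(\cA)\beta)$ is $\cA$-feasible with the same encoding.

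I do not expect a genuine obstacle here: once Proposition~\ref{proposition:equivalence} is available the argument is essentially bookkeeping. The two points that need a little care are (i) that the decomposition $\boldw=\sum_i\boldw^{(i)}$ is performed coordinate-by-coordinate through the fixed maps $\phi_i$, so that each $\boldw^{(i)}$ genuinely lies in $S_i^k$ and not merely in $\cA^k$; and (ii) that the $\theta$ subroutine invocations all read from the \emph{same} fixed stored vector and the fixed node $\1\boldx^\intercal$ (nothing is consumed), so they run independently and their access costs simply add. The only conceptual remark is that adding the universal node $\1\boldx^\intercal$ is compatible with ``the same encoding scheme'' because that node does not depend on $\cA$ and contributes $o(k)$ to the storage, hence does not affect the asymptotic redundancy ratio.
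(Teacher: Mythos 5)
Your proposal is correct and follows essentially the same route as the paper: decompose $\boldw$ coordinate-wise into $\boldw^{(1)},\dots,\boldw^{(\theta)}$ with $\boldw^{(i)}\in S_i^k$ via the $p$-complexity decomposition, invoke the ``furthermore'' part of Proposition~\ref{proposition:equivalence} for each summand using the shared node $\1\boldx^\intercal$, and absorb the additive constant in the asymptotic limit. Your explicit remarks about fixing the maps $\phi_i$ and about the extra node not affecting the redundancy ratio are just slightly more careful bookkeeping than the paper's own write-up.
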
 
\begin{proof}
    Let $C_p(\cA)=\theta$. By the definition of $p$-complexity, there exists $S_1, \dots,S_{\theta} \subset \bR$ 
satisfying
$$\cA \subseteq
\smallplus_{i=1}^{\theta} S_i$$ 
such that each $S_i$ is a set of size $p$ whose elements form an arithmetic progression. It follows that for given any $\boldw \in \cA^k$, 
one can find $\boldw^{(i)} \in S_i^k$, $i \in [\theta]$, such that
$$ \boldw =\sum_{i \in [\theta]}\boldw^{(i)}.$$ Therefore,
$$ \boldw\boldx^\intercal =\sum_{i \in [\theta]}\boldw^{(i)}\boldx^\intercal.$$
Given a protocol for~$\cA_p$, store~$\boldx$ according to it, with an additional node storing~$\1 \boldx^\intercal$ if not already present. Since every $S_i$ is an arithmetic progression, it follows from the proof of Proposition~\ref{proposition:equivalence} that for each $\boldw^{(i)}$ 
there exists a $\hat{\boldw}^{(i)} \in \cA_p$, such that $\boldw^{(i)}\boldx^\intercal$  can 
be computed using $\hat{\boldw}^{(i)}\boldx^\intercal$ 
and $\1 \boldx^\intercal$. 
Therefore, the total access required for computing $\boldw\boldx^\intercal$ is at most $\theta$ times the access requirement of the given $\cA_p$-protocol plus one additional node storing $\1 \boldx^\intercal$. The statement of the theorem follows due to the limit operation in the definition of feasible pairs.   
\end{proof}
%The proof is deferred to the appendix. 
From Lemma~\ref{lem:bound}, it follows that for any coefficient set $\cA$, the universal protocol given by Theorem~\ref{thm:complexity} has access ratio $(|\cA|-1)\beta$ in the worst-case. If $\cA$ is not a set of the highest complexity, then a smaller access ratio is possible.
Hence, further understanding of the complexity of various sets would be beneficial. 
 Similar to \cite{RamkumarRT23_Allerton}, it is possible to provide alternate definitions of $p$-complexity and prove results on $p$-complexity of arithmetic and geometric progressions. However, we do not delve into this in the current paper and it is left as future work.
\section{Three-valued Computations} \label{sec:three_valued}

In this section, we present explicit schemes by applying Corollary~\ref{cor:F3} to several ternary covering codes. 
It is shown that the resultant $\{0, \pm 1\}$-feasible pairs outperform the best-known  results~\cite{RamkumarRT23_Allerton,RamkumarRT23_ISIT}. For any code $\cC$, we define $\tilde{c} \triangleq |\tilde{\cC}|$, where $\tilde{\cC}$ is the subset of $\cC$ given in Definition~\ref{def:hatC}.

\paragraph{Entire space}
The entire space~$\cC=\bF_3^i$ is a covering code of size~$3^i$ and covering radius~$r=0$.  Clearly,~$\tilde{c}=\frac{3^i-2i-1}{2}$ and the pair~$\left\{ \left(\frac{3^i-1}{2i},\frac{1}{i} \right) \right\}_{i\ge 1}$ is $\{0,\pm 1\}$-feasible.

\paragraph{Repetition code}
It is readily verified that the repetition code~$\cC=\operatorname{span}_{\bF_3}\{\bar{\1}\}\subseteq \bF_3^i$ is a code of size~$3$ and covering radius~$r=\floor{\frac{2i}{3}}$. It also readily verified that~$\tilde{c}=1$, which gives rise to the $\{0,\pm 1\}$-feasible pairs~$\left\{ \left(\frac{i+1}{i}, \frac{1+\floor{\frac{2i}{3}}}{i}\right) \right\}_{i\ge 1}.$
By setting $k=i$, we get that the pair~$(1,2/3)$ is $\{0,\pm 1\}$-feasible.

\paragraph{Hamming code} 
%The~$j$'th Hamming code over~$\bF_3$ is of length~$\frac{3^j-1}{2}$, dimension~$\frac{3^j-1}{2}-j$, and covering radius~$\ell=1$. Since it is linear, and since it does
The~$[4,2]_3$ Hamming code $\cC$ over~$\bF_3$ generated by
\begin{align*} %\label{equation:HammingMatrix}
G_\text{H} =\begin{pmatrix}
	\bar{0} & \bar{1} & \bar{1} & \bar{1} \\
	\bar{1} & \bar{0} & \bar{1} & \bar{2} 
\end{pmatrix},
\end{align*}
is a code of covering radius~$1$. 
As shown in Example~\ref{example:hamming}, clearly $\tilde{c}=4$.  
This gives rise to the $\{0,\pm 1\}$-feasible pair~$(2,1/2)$. 

We now use the amalgamated direct sum technique described in Section~\ref{section:coverigncodes} to get more feasible points. Notice that the~$[4,2]_3$ Hamming code is not normal, and hence cannot be amalgamated.
\paragraph{Expanded Hamming code and its amalgamations} 
We define the \textit{expanded} Hamming code over~$\bF_3$ with generator matrix 
\begin{align*} %\label{equation:eHammingMatrix}
G_{\text{HE}}	= \begin{pmatrix}
	\bar{0} & \bar{1} & \bar{1} & \bar{1} &\bar{0} \\
	\bar{1} & \bar{0} & \bar{1} & \bar{2} &\bar{0} \\ 
	\bar{0} & \bar{0} & \bar{0} & \bar{0} &\bar{1} 		
\end{pmatrix}.
\end{align*}
The next proposition %, whose proof is given in the full version of the paper \cite{non_binary_low_acess}, 
shows that the expanded Hamming code is compatible with amalgamated direct sum. 
\begin{proposition} \label{prop:expanded_hamming}
	The expanded Hamming code~$\cC$ has covering radius~$1$. This code is normal, with its last coordinate acceptable.
\end{proposition}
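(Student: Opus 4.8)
The plan is to prove the two claims separately: first the covering radius, then the normality with the last coordinate acceptable.

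\textbf{Covering radius.} First I would observe that the expanded Hamming code $\cC$ over $\bF_3$ is, up to a trivial permutation of coordinates, the amalgamated-type direct sum of the $[4,2]_3$ Hamming code $\cH$ (generated by $G_{\text{H}}$) with the length-$1$ zero code $\{\bar0\}$ placed in the fifth coordinate — more precisely $\cC = \{(\boldh, \bar z) \mid \boldh \in \cH,\ \bar z \in \bF_3\}$, the direct sum $\cH \oplus \bF_3^1$. Since $\cH$ has covering radius $1$ and $\bF_3^1$ has covering radius $0$, the covering radius of the direct sum is $1 + 0 = 1$; I would state this using the standard fact that covering radius is additive under direct sum, and also note the trivial lower bound (the code is not all of $\bF_3^5$, so $r\ge 1$). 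This handles the first sentence of the proposition.

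\textbf{Normality and the acceptable last coordinate.} For the second claim I must show $N^{(5)} \le (r+1)|\Sigma| - 1 = 2\cdot 3 - 1 = 5$, i.e., for every $\boldv \in \bF_3^5$, $\sum_{z\in\bF_3} d_H(\boldv, \cC_z^{(5)}) \le 5$, where $\cC_z^{(5)}$ is the set of codewords with fifth coordinate equal to $z$. The key structural point is that $\cC_z^{(5)} = \{(\boldh, \bar z) \mid \boldh \in \cH\}$ for each $z$, so $d_H(\boldv, \cC_z^{(5)}) = d_H(\boldv', \cH) + [\,v_5 \ne z\,]$ where $\boldv' = (v_1,\dots,v_4)$ and $[\cdot]$ is the indicator. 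Summing over the three values $z\in\bF_3$, exactly two of them differ from $v_5$, so $N^{(5)} = \max_{\boldv'} 3\, d_H(\boldv', \cH) + 2$. Thus it suffices to show $d_H(\boldv', \cH) \le 1$ for all $\boldv' \in \bF_3^4$ — which is exactly the statement that $\cH$ has covering radius $1$, already established. Hence $N^{(5)} \le 3\cdot 1 + 2 = 5$, so the last coordinate is acceptable and $\cC$ is normal.

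I would also remark that $\cC_z^{(5)}$ is nonempty for every $z\in\bF_3$ (it has $9$ elements each), which is the nonemptiness hypothesis needed later when $\cC$ is used in an amalgamated direct sum. The main obstacle here is conceptual rather than computational: recognizing that appending the single coordinate generated by $(\bar0,\bar0,\bar0,\bar0,\bar1)$ turns the non-normal $[4,2]_3$ Hamming code into a normal code by making the new coordinate acceptable "for free" — the norm bound in the new coordinate only ever sees $d_H(\boldv',\cH)$, which is at most the covering radius. Once that observation is in place the computation is immediate, so I expect no serious difficulty; the only mild care needed is to confirm the identification of $\cC$ with $\cH \oplus \bF_3^1$ respects the coordinate ordering in $G_{\text{HE}}$, which is clear since the third generator row is supported solely on the fifth coordinate and the first two rows are supported on the first four.
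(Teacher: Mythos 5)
Your proof is correct and takes essentially the same route as the paper: both identify $\cC$ as $\cH\times\bF_3$, deduce the covering radius from that of the $[4,2]_3$ Hamming code, and bound $N^{(5)}\le 3\cdot 1+2=5$ by splitting $d_H(\boldv,\cC_z^{(5)})$ into $d_H(\boldv',\cH)$ plus an indicator on the fifth coordinate. (One nitpick: your opening phrase ``the length-$1$ zero code $\{\bar{0}\}$'' is a slip, since the fifth coordinate ranges over all of $\bF_3$; your subsequent precise description $\cC=\{(\boldh,\bar{z})\mid\boldh\in\cH,\ \bar{z}\in\bF_3\}$ is the correct one and is what the argument actually uses.)
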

\begin{proof}
    To prove the covering radius claim, we need to show that for any~$\bar{\boldw}\in\bF_3^5$, there exists~$\bar{\boldc}\in\cC$ of Hamming distance at most~$1$ from~$\bar{\boldw}$. If the last entry of~$\bar{\boldw}$ is~$\bar{0}$, then there is some linear combination of the top two rows of $G_{\text{HE}}$ which gives a codeword $\bar{\boldc}\in\cC$ of Hamming distance at most~$1$ from~$\bar{\boldw}$. This follows from the fact that the $[4,2]_3$ Hamming code has covering radius~$1$. If the last entry of~$\bar{\boldw}$ is some~$\bar{a}\ne \bar{0}$, then the last row of $G_{\text{HE}}$ can be used to match the last entry. The claim follows. 

Now we prove that $\cC$ is normal, with its last coordinate acceptable. Let~$\cH$ be the $[4,2]_3$ Hamming code. Observe that for all $\bar{a} \in \bF_3$,
	\begin{align*}
		\cC_{\bar{a}}^{(5)} = \cH\times \{\bar{a} \}. 
	\end{align*} 
	Since the code~$\cH$ is of covering radius~$1$, it follows for any~$\bar{\boldw}\in\bF_3^5$ that
	\begin{align*}
		d_H(\bar{\boldw},\cH\times\{\bar{a}\})\le\begin{cases}
			1 & \mbox{if }\bar{w}_5=\bar{a}\\
			2 & \mbox{else.}
		\end{cases}. 
	\end{align*}
	Therefore,
	\begin{align*}
		N^{(5)}=\max_{\bar{\boldw}\in\bF_3^5}\left\{ \sum_{\bar{a}\in\bF_3}d_H(\bar{\boldw},\cC_{\bar{a}}^{(i)}) \right\}\le 5,
	\end{align*}
	which implies that~$\cC$ is normal and its last coordinate is acceptable.
\end{proof}

Now, since the $i$-repetition code  for~$i\ge 1$ is normal with all coordinates acceptable, it follows that it can be amalgamated with the expanded Hamming code. The resulting code~$\cC_i$ is spanned by the matrix
\begin{align*}%\label{equation:eHammingMatrix}
	\begin{pmatrix}
		\bar{0} & \bar{1} & \bar{1} & \bar{1} &\bar{0}^i \\
		\bar{1} & \bar{0} & \bar{1} & \bar{2} &\bar{0}^i \\ 
		\bar{0} & \bar{0} & \bar{0} & \bar{0} &\bar{1}^i 		
	\end{pmatrix},
\end{align*}
where~$\bar{0}^i$ denotes~$i$ occurrences of~$\bar{0}$ (similarly~$\bar{1}^i$), and it is of length~$m=i+4$, size~$27$ and covering radius at most~$r=1+\floor{\frac{2i}{3}}$. It can be verified that~$\tilde{c}=12$, which implies the $\{0, \pm 1\}$-feasibility of the pair $\left\{ \left(\frac{i+16}{i+4},\frac{\floor{\frac{2i}{3}}+2}{i+4}\right) \right\}_{i\ge 1}.$
\paragraph{Comparison with known results}
For $\cA=\{0, \pm1\}$, the best-known approach in the literature is as follows. First construct a $\{\pm 1\}$-protocol using binary covering codes as in \cite{RamkumarRT23_ISIT}. Since $C_2(\{0,\pm1\})=2$,  the result in \cite{RamkumarRT23_Allerton} ensures there is a $\{0, \pm1\}$-protocol with twice the access ratio of the  $\{\pm 1\}$-protocol. It is known \cite{RamkumarRT23_Allerton,RamkumarRT23_ISIT} that points on the line joining two feasible pairs are also feasible; this applies to our results as well. 
Fig.~\ref{fig:comparison} shows the best $\{0,\pm1\}$-feasible pairs given by the ternary codes in the previous subsections. It can be readily seen that our protocols outperform the results from~\cite{RamkumarRT23_Allerton,RamkumarRT23_ISIT}. It follows from Proposition~\ref{proposition:equivalence} that every $3$-element set $\{a,b,c\}$ such that $(b-a)=(c-b)$ has the same feasible pairs as $\{0,\pm1\}$. Therefore, our new points outperform the best-known results for such $3$-element sets as well. 
\begin{figure}[ht!]
  \centering
\includegraphics[width=0.45\textwidth]{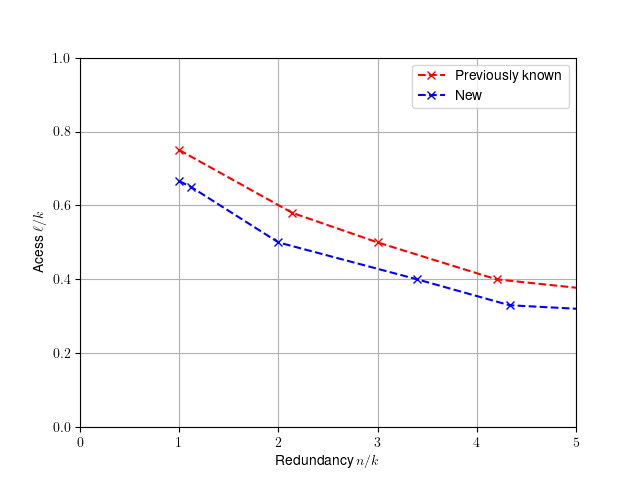}
  \caption{\small{The plot depicts access-redundancy tradeoff for $\cA=\{0, \pm1\}$. The points in blue are the new solutions obtained using ternary covering codes, whereas red is used to indicate the best-known systematic solutions in the literature~\cite{RamkumarRT23_Allerton,RamkumarRT23_ISIT}.}}
  \label{fig:comparison}
\end{figure}
%It can be seen that for $\{0,\pm1\}$ compares favorably against the previously known feasible pairs.  
%\red{[I didn't quite understand this paragraph.]}

% Further, it can be argued that this also leads to better feasible pairs for some larger coefficient sets.  
Further, this also leads to better feasible pairs for some larger coefficient sets.  
For instance, consider $\cA=\{0,1,2,3,4,5,6,7,8\}$. As shown in Example~\ref{example:C3}, we have $C_3(\cA)=2$. It can be verified that $C_2(\cA)=4$. If one uses Theorem~\ref{thm:complexity} and the result from \cite{RamkumarRT23_Allerton} to get $\cA$-protocols, the access ratio will be double that of the underlying $\{0, \pm1\}$-protocols in both cases. Therefore, our ternary code method will result in feasible pairs outperforming those given by \cite{RamkumarRT23_Allerton}. 
	
\printbibliography
\clearpage
\appendices
\end{document}